\def\algoavg{{\mbox{{\tt avg}}}}
\def\vanc{{\mbox{{\tt vancouver}}}}
\def\V{{\mbox{{\rm var}}}}
\def\E{{\mbox{{\rm E}}}}
\begin{document}

\sloppy

\title{CrowdGrader: \\ Crowdsourcing the Evaluation of Homework
Assignments\titlenote{This work was supported in part by the Google Research
Award ``Crowdsourced Ranking''. The authors are listed in alphabetical order.}}
\subtitle{Technical Report UCSC-SOE-13-11 \\ August 2013}

\numberofauthors{2}

\author{
 \alignauthor 
  Luca de Alfaro \\
  \affaddr{Computer Science Dept.}\\
  \affaddr{University of California}\\
  \affaddr{Santa Cruz, CA 95064, USA}\\
  \email{luca@ucsc.edu}
 \alignauthor
  Michael Shavlovsky\\
  \affaddr{Computer Science Dept.}\\
  \affaddr{University of California}\\
  \affaddr{Santa Cruz, CA 95064, USA}\\
  \email{mshavlov@ucsc.edu}
}
\maketitle

\pagenumbering{arabic}
\pagestyle{plain}

\bibliographystyle{abbrv}

\begin{abstract}
Crowdsourcing offers a practical method for ranking and scoring large amounts
of items.
To investigate the algorithms and incentives that can be used in crowdsourcing
quality evaluations, we built CrowdGrader, a tool that lets students submit and
collaboratively grade solutions to homework assignments.
We present the algorithms and techniques used in CrowdGrader, and we describe
our results and experience in using the tool for several
computer-science assignments.

CrowdGrader combines the student-provided grades into a consensus grade for
each submission using a novel crowdsourcing algorithm that relies on a reputation system. 
The algorithm iterativerly refines inter-dependent estimates of the consensus
grades, and of the grading accuracy of each student.
On synthetic data, the algorithm performs better than alternatives not based on
reputation.
On our preliminary experimental data, the performance seems dependent on the
nature of review errors, with errors that can be ascribed to the reviewer being
more tractable than those arising from random external events.
To provide an incentive for reviewers, the grade each student receives in an
assignment is a combination of the consensus grade received by their
submissions, and of a reviewing grade capturing their reviewing effort
and accuracy. 
This incentive worked well in practice.

\end{abstract}

\section{Introduction}

Ranking items according to their quality is a universal problem, occurring
when hiring, admitting students, accepting conference papers, presenting
search results, selecting winners in contests, and more. 
Often, the quality of items is best judged by human evaluators.  
As relying on a single evaluator is often impractical --- and can be perceived
as unfair --- an overall evaluation can be obtained via crowdsourcing: several
evaluators compare or grade a subset of the items, and their feedback is then
combined in an overall ranking or scoring of the items.

To study the algorithms and incentives that can be used in crowdsourcing
quality evaluations, we built CrowdGrader, a tool for the crowdsourced
evaluation of homework assignments.
CrowdGrader lets students submit, and collaboratively grade, solutions to
homework assignments; their grade for each assignment depends both on the
quality of their submitted solution, and on the quality of their work as
graders.
CrowdGrader is available at \url{http://www.crowdgrader.org/}.

We chose to focus on homework grading for several reasons.
First, this is a problem that we know well,
and we were confident that the tool would be used by us and by some of
our colleagues, providing valuable experimental data.
Furthermore, solutions submitted to a homework assignment share the same topic:
we do not need to address the problem of matching the topic of each submission
to the domain of expertise of each reviewer, as it is necessary for conference
submissions. 
The students submitting the homework solutions would provide a ready pool of
graders. 
Last, but not least, we hoped the tool would provide educational benefits to
the students. 
We hoped students would benefit from being able to examine the solutions
submitted by other students: accomplished students would be able to look at
alternative ways of solving the same problem, and students who encountered
difficulties would be able to study several working solutions to the problem
while grading.
We also hoped that students would benefit from their peer's feedback.

The first question we studied with the help of CrowdGrader was whether it is
best to ask students to compare and rank submissions, or to assign them
numerical grades. 
Ranking alone did not work to our satisfaction. 
Students expresseded some uneasiness in ranking their peers, especially as they
perceived ranking as a blunt tool, unable to capture the difference between a
pair of roughly equivalent submissions, and a pair of submissions, one of
which was very good, and the other non-functional.
As a consequence, ranking was frequently skipped.
We settled on asking students to both assign grades to each submission, and
rank them in quality order.  
Of course, the ranking can be derived from the grades, but we believe that the
exercise of ranking added precision to the grades. 

The second question concerned the algorithms that can be used to
merge the grades provided by each evaluator, into overall {\em consensus\/}
grades for each assignment.
We developed a novel crowdsourcing algorithm, which we nicknamed \vanc, that
combines the grades provided by the students with the help of a reputation
system that captures the student's grading accuracy. 
The algorithm proceeds via iterations, following a structure inspired by
\cite{Karger2011}, and inspired also by expectation maximization techniques
\cite{dempster1977maximum,dawid1979maximum,raykar_2010}.
In each iteration, the algorithm computes a consensus estimate of the
grade of each submission, weighing the student input according to the accuracy of each
student; the consensus estimates are then used to update the estimated accuracy
of the students.
On synthetic data, \vanc\ performs well, far outperforming algorithms such as
the average or median.
On our real-world data, the results are mixed.  
Our impression is that \vanc\ outperforms simpler algorithms when the grading
errors of the students are not random. 
However, in one of the classes where CrowdGrader was used, the grading errors
were random in nature, due to mis-matches between the code compilation
environments of the students submitting and evaluating homework solutions, and
in that case, \vanc\ performed slightly worse than simple average.

The last question we studied concerned the incentives necessary to obtain
quality evaluations from the students.
Our approach was simple: we made the grade each student received depend on both
the quality of the solution they submitted, and on the quality of their review
and grading work.
This worked well in practice, and we will describe the methods we used
for assigning grading credit.

The remainder of the paper presents in detail the techniques and algorithms
used, and the experimental results obtained.

\section{Previous work}

The work most closely related in goals to ours is the proposal to crowdsource
the review of proposals for use of telescope time by \cite{Merrifield09}, as
well as the recent NSF pilot project for reviewing funding proposals
\cite{NSFDear2013}. 
As in those approaches, we also distribute the task of reviewing the
submissions to the same set of people who submitted the items to be reviewed. 
Both for proposals submitted to a specific panel, and for solutions submitted
to the same homework assignment, the submissions are on sufficiently related
topics that the problem of matching submission topic with reviewer expertise
can be disregarded. 
For proposals, of course, care must be taken to avoid conflicts of interest;
our situation for homeworks is relatively simpler. 
Where the problems differ is that proposal reviewing is essentially a top-$k$
problem: the best $k$ proposals must be selected for funding. 
Homework grading, on the other hand, is an evaluation problem: each item needs
to be graded on a scale. 
In top-$k$ problems, the most important consideration is precision at the top;
mis-ranking items that are far from the top-$k$ carries no real consequence.
In our evaluation problem, each evaluation carries approximately the same
importance, and we do not need to precisely rank students whose
submissions have approximately the same quality. 
While there are techniques that can be applied to both problems, this
difference in goals justifies the reliance of \cite{Merrifield09,NSFDear2013}
on comparisons, and ours on grades.
Comparisons can allow the precise determination of the top-$k$ items in a
ranking \cite{das2010ranking}; we chose instead to develop reputation-based
algorithms for merging grades.

The works of \cite{Merrifield09} and \cite{NSFDear2013} discuss incentive
mechanisms for reviewers, consisting in awarding a better placement in the
final ranking to proposals whose authors did a better job of reviewing. 
We follow the same approach, but we have the additional constraint that
students must find the reviewing work appropriately rewarded with respect to
the time it takes. 
Students are most often under time pressure, and they often consider the
question of whether one hour is better spent reviewing for one class, or
working on the homework assignment of another.
A reward such as the one of \cite{NSFDear2013}, where a couple of places in the
ranking are awarded based on reviewing work, would not have sufficed,
especially in the context of an evaluation rather than top-$k$ setting.
Rather, we let instructors chose a reward magnitude that is commensurate with
the time required by reviewing. 
Furthermore, unlike \cite{Merrifield09,NSFDear2013}, we face the additional
constraint that students must regard the reward as fair and non-punitive; as we
will see in Section~\ref{sec-incentives}, this affected our choice of reward
metrics.

The effect of review incentive on the quality of the ranking is examined in
depth in \cite{Naghizadeh2013}. 
The main problem, also raised in \cite{Merrifield09}, is that the incentive
mechanism makes the grading a ``Keynesian beauty contest", where reviewers
are rewarded for thinking like other reviewers; in turn, this may encourage a
``race to mediocrity'', in which non-controversial, blander propsals may fare
better than more audacious and original ones. 
We agree with the authors of \cite{Naghizadeh2013} that this may be a true
problem for proposal review. 
However, we believe that in the context of homework assignments, the problem
may be minor or non-existent. 
Our incentive function, described in Section~\ref{sec-incentives}, gives a
farily generous reward that would not overly decrease if a students mis-ranks
one of the assignments; this gives more leeway to students presented with a
homework submission that does not follow the beaten path. 
We also believe that the less competitive evaluation setting, as compared to a
top-$k$ setting, may lessen the problem.
Finally, in our somewhat limited real-world experience, students generally
were more ready to reward originality than teaching assistants.
The main goal of a teaching assistant is often to avoid controversy, in order
to avoid confrontations with students.
Thus, teaching assistants generally felt a stronger obligation to follow a
rigid grading scheme, for the sake of consistency, and subtract a fixed number
of points for each type of error encountered.
Students felt less constrained by the need for full consistency, as the
authors of the submissions they graded could not easily identify or compare
the grades they received from the same grader.

The reputation-based crowdsourcing algorithm we use to aggregate grades is
inspired by the algorithm of~\cite{Karger2011} for the aggregation of boolean
input. 
Unlike that work, however, we do not have a proof of convergence for our
crowdsourcing algorithm, nor a full theoretical characterization of how the
precision depends asymptotically on the number of reviews. 
The algorithm is also inspired, and related, to the technique of expectation
maximization
\cite{dempster1977maximum,dawid1979maximum,smyth_94,jin2002learning,whitehill2009whose,raykar_2010,welinder_10}.
The approach is also related to belief propagation methods
\cite{Pearl1988,Yedidia2003}. 
A related, but coarser, method was used by one of the authors to aggregate
information provided by editors of Google Maps via the Crowdsensus system
\cite{dealfaro2011reputation}.

Rank aggregation methods have a very long history.
The problem originally arose in the context of elections. 
In a classical contribution \cite{Borda}, de~Borda proposed
that each voter assigns each of $n$ candidates a score $1, 2, \ldots, n$,
according to the preference; the candidates were then ranked
according to the total score they received from all voters. 
Again in the context of elections, Arrow proved a famous theorem, stating that
any rank aggregation that satisfies transitivity, unanimity, and independence
of irrelevant alternatives is a dictatorship, where there is a single fixed
voter (the dictator) who determines the outcome
\cite{arrow,geanakoplos2005three}.
An overview of rank aggregation methods used in democracies around the world
can be found in \cite{Lijphart}.

Kemeny-optimal rankings minimize the sum of Kendall-Tau distances between the
ranks proposed by individual voters, and the aggregate rank. 
The problem of computing Kemeny-optimal rankings is known to be NP-hard
\cite{bartholdi,dwork2001rank}.
Cynthia Dwork, Ravi Kumar et al.\ \cite{dwork2001rank} study approximation
methods that can be applied to the problem of ranking search results by
combining the output of several rankers.
Nir Ailon et al. \cite{ailon_2005} developed an algorithm to
find approximate solution subject to additional constraints.
The problem of finding Kemeny optimal solution is equivalent to the 
minimum feedback arc set problem, and Kenyon-Mathieu and Schudy
\cite{kenyon-mathieu_2007} obtained polynomial time algorithm for computing a
solution with loss at most $(1 + \epsilon)$. 

On-line algorithms for rank aggregation have been long studied, especially
in their application to ranking in sports such as chess and tennis.
In these algorithms, a global ranking is gradually refined and updated
according to a stream of incoming comparisons.
In sports, these comparisons consist in the outcomes of matches between
players; in other settings, the comparisons may be obtained by asking users or
visitors to sites to select a winner among a set of alternatives. 
In the original paper by Bradley and Terry, the player strenghts are obtained
from match outcomes via a maximum-likelyhood approach
\cite{bradley_terry,luce}; Elo replaced this with a dynamic update process
which could account also for the time-varying aspect of player strenghts
\cite{ELO}.
Glickman then refined the models and the algorithms by first adapting a
Bayesian update approach \cite{glickman1993paired}, and by then obtaining
efficient algorithms via approximation and parameter estimation
\cite{glickman1999}.

\section{CrowdGrader}

CrowdGrader lets students submit and collaboratively grade
solutions to homework assignments.
The lifecycle of an assignment in CrowdGrader consists of three phases: a
submission phase, a review phase, and a grading phase. 

The submission phase is standard.

In the review phase, each student must review a given number of submissions. 
The more submissions each student reviews, the more accurate the
crowd-sourced grade will be, but the larger the workload on the students. 
In our experiments, asking that each submission was reviewed by 5 or
more students yielded acceptable accuracy.

Once the review period is over, CrowdGrader computes a consensus grade for each
submission, aggregating the grades or comparisons provided by the students via
the algorithms we will present in Section~\ref{sec-algos}.
Crowdgrader then assigns a {\em ``crowd-grade''} to each student, by
combining the consensus grade of the submission with a {\em review
grade\/} which quantifies the review effort and accuracy of each student.
In our experiments, computing the crowd-grade by giving  75\% weight to the
submission grade, and 25\% to the review grade, provided sufficient motivation
for the students to put adequate effort in reviewing. 
The instructors can either use the crowd-grade as the grade for the
student in the assignment, or they can fine-tune the final grades, for instance
to correct overall biases.

We applied CrowdGrader to the grading of coding assignments, namely,
Android programming assignments (CMPS~121, taught by one of the authors at
UCSC); C++ programming assignments (CMPS~109, also taught at
UCSC); and Java assignments (taught at University of Naples). 
While CrowdGrader can support in principle many types of assignments, we
focused on programming assignments for three reasons. 

Programming assignments are especially burdensome to grade: unpacking,
compiling, and testing each submission is a time-consuming process.
CrowdGrader enabled us to give coding assignments weekly, spreading what would
have been a very onerous grading task on the students participating in the
class. 

Second, we thought that students would be able to test and evaluate the
submitted code with reasonable accuracy.

Third, we believed that students would directly benefit from reading the code
submitted by other students. 
Strong students would be presented with alternative ways of solving the
problems, and weaker students would have an opportunity to study
several working solutions.
Indeed, students reported a positive experience from the tool,
citing their ability to learn from others, and at the usefulness of the
feedback they received, as the main benefits.

The code for CrowdGrader as used for this paper is available from
\url{https://github.com/lucadealfaro/crowdranker}, and CrowdGrader itself is
available at \url{http://www.crowdgrader.org/}.

\section{Design of the Review Phase}
\label{sec-review}

The review phase is of primary importance for the accuracy of the generated
ranking, and we experimented with several designs. 

\subsection{Review assignment}

We opted for an anonymous review process, in which submissions to review were
assigned automatically to students. 
Since students could not choose which submissions to review, nor in general did
they know the identity of the submissions' authors, they had limited ability to
collude and cause their friends to receive higher grades.

In usual computer-science conferences, papers are assigned to program-committee
members in a single batch; each member then has a period of time to read the
papers and enter all reviews. 
We decided to follow a different approach, in which submissions were assigned
to students for review one at a time: students were assigned a new review task
only upon completion of the previous one.
Our chief concern in making this decision was to ensure that students would not
get the submissions, and their reviews, mixed up. 
Unlike conference papers, the submitted homework solutions are all
on the same topic, and they can be fairly similar to each other;
furthermore, to preserve anonymity, submissions under review were denoted by
un-memorable names such as ``Homework~2 Assignment~3''.
By having students work on one review at a time, we hoped to cut down on the
possibility of mix-ups. 
Indeed, we received no valid reports of mis-directed
reviews.

Delaying the review assignment until the last moment offered two
additional benefits.
First, we were able to ensure that all submissions received roughly the same
number of reviews, even if some students failed to do any reviewing work. 
For each submission, we considered the number of {\em likely reviews,}
consisting of the completed reviews, along with the review tasks that had been
assigned only a short time before. 
When assigning reviews, we chose submissions having least number of likely reviews.
Second, the delayed assignment let us gather information about the
quality of submissions, as the review process proceeded, enabling us to
optimize the review assignment by routing submissions to students who were in
the best position to provide feedback on them.
We have experimented with various techniques for routing submissions, but we do
not yet have sufficient experimental evidence to report on the performance of
the algorithms.

\subsection{Comparisons vs. grades}

For the first homework assignment conducted using CrowdGrader, we decided
to ask students to rank homework submissions, rather than grade them. 
We had more faith in the students' ability to compare submissions,
than in their ability to assign grades with sufficient consistency, so that grades
assigned by different students would be comparable. 
When reviewing a submission, students were presented with a screen displaying
the submissions they had already ranked, in the quality order they had
previously entered; at the bottom, and in a highlighted color, was the
new submission to review. 
Students were instructed to write some feedback for the submission's author,
and then to drag and drop the new submission into the appropriate place in the ranking.\footnote{While inserting the new
submission in the ranking, the students were able to re-order previously ranked
submissions.}

Unfortunately, after writing the feedback paragraph, many students skipped the
ranking step, leaving the new submission where they found it --- at the bottom of the
ranking. 
To confirm this, we measured the fraction of times $f_h$ that students would
rank the newly assigned submission higher than a given submission they had already reviewed. 
Had students been accurate, this fraction should have been close to 50\%, since
there was no relationship between the quality of the new submissions, and that
of the previously-reviewed ones.
Instead, in the first assignment this fraction was only 36\%. 
Even after strongly reminding students to provide a ranking, the fraction $f_h$
rose only to $41\%$ in the second assignment. 
Table~\ref{table-fh} reports the value of $f_h$ for the five CMPS 121 Android
assignments.

\begin{table}[t]
\begin{center}
\begin{tabular}{r|r|r|}
Assignment & $f_h$ & Number of pairs \\ \hline
CMPS 121 hw 1 & 36\% & 252 \\
CMPS 121 hw 2 & 41\% & 231 \\
CMPS 121 hw 3 & 53\% & 271 \\
CMPS 121 hw 4 & 52\% & 277 \\
CMPS 121 hw 5 & 49\% & 221 \\
\end{tabular}
\end{center}
\caption{Fraction $f_h$ of pairs consisting of a previously-reviewed
submission, and a submission under review, in which the submission under review was ranked higher by the student than the previously-reviewed one.}
\label{table-fh}
\end{table}

Talking to students, we understood that they were skipping the ranking step
because of a combination of forgetfulness, and unwillingness. 
Several students mentioned that they felt unconfortable with providing a
ranking of their peers.
Furthermore, they thought that ranking was a blunt instrument.
They complained about having to arbitrarily rank submissions that
they felt were roughly equivalent, and they worried that ranking did not
differentiate between the situations of submissions of roughly equivalent
quality, and submissions of widely different quality.
While ranking can indeed be precise, we are concerned not only with precision,
but also with how the tool is received by the students.

The problem in our UI, of course, was that we could not distinguish
between a skipped ranking, and a valid ranking. 
Starting from the third homework assignment, we modified the UI so that
students needed to {\em both\/} rank the submissions, {\em and\/} assign a
grade to each one: the ranking had to reflect the grades. 
As students could not leave grades blank, this effectively forced students to
provide a valid ranking. 
Table~\ref{table-fh} shows that from assignment~3 onwards the fraction
$f_h$ was very close to 50\%.
Adding grades led to a more accurate ranking of the submissions ---
regardless of whether the grades themselves were used! 
The student satisfaction with CrowdGrader also markedly increased, once grades
were seen as the primary method of providing input to the tool.

Once grades were available, we decided to use the
additional information they convey, and we focused on the development of
crowdsourcing algorithms for the aggregation of grades.
In the current UI of CrowdGrader, students still need to both rank
submissions, and assign them a grade. 
Obviously, once we have grades, the ranking step is un-necessary. 
However, we believe that asking students to also rank the submissions forces
them to consider the relationship between submissions with similar grades,
leading them to fine-tune the grades to more accurately reflect their quality
assessment.
We intend to confirm this belief in future work, comparing the accuracy of the
grades with, and without, the ranking step.

\subsection{Rejecting evaluations}

We discovered early on that it was important to allow students to leave some
submissions ungraded, and yet consider their reviewing duty for the submission
as completed, as far as the computation of the students' own review grades were concerned.
In our programming assignments, there were many cases in which 
well-intentioned students were unable to review submissions. 
In the Android class, their installation of Eclipse and Android SDK
occasionally misbehaved in a way that left students unable to load and review
the code submitted by other students.
In the C++ class, glitches or differences in the build environment
occasionally prevented students from compiling and executing the submissions
under review. 
Initially, students needed to enter a grade to receive credit for
their review effort, and students entered very low grades for
the submissions they could not evaluate. 
In our informal analysis of the accuracy of the tool, this was the
largest source of discrepancy in the grades assigned by
different students to the same submission. 
The solution was to let students flag a review task as ``declined", omitting
the grade, and providing instead an explanation of why they were declining it.
In our experiments, no more than 1\% of submissions required instructor
evaluation, since all students declined their review; these submissions
typically were markedly incomplete and non-functional.

\section{The Vancouver Crowdsourcing Algorithm}
\label{sec-algos}

Once students assign grades to the submissions they review, we need to
aggregate the student-provided grades into a {\em consensus\/} grade for each
submission.
The simplest algorithm for computing consensus grades consists in
averaging the grades each submission has received; we refer to this algorithm
as \algoavg.
We developed an alternative algorithm, the \vanc\ algorithm.\footnote{The
algorithm owes its name to the fact that it was conceived while strolling the
pleasant streets of this Canadian city.} 
The \vanc\ algorithm measures each student's grading accuracy, by comparing the
grades assigned by the student with the grades compared to the same submissions
by other students, and gives more weight to the input of students with higher
measured accuracy. 
The algorithm thus implements a reputation system for students, where higher
accuracy leads to higher reputation, and to higher influence on the consensus
grades.

On synthetic data, \vanc\ is far more accurate than \algoavg.
On our experimental data, \vanc\ performs better than \algoavg, but as we will
report in Section~\ref{sec-results}, the difference is not quite as large,
perhaps due to the fact that our assumptions about user behavior do not fully
correspond to how students behave in practice.
%

\subsection{Variance minimization principle}

The \vanc\ algorithm is based on the following fact.

\begin{prop}{(minimum variance estimator)}
Suppose we have available uncorrelated estimates $\hat{X}_1, \ldots, \hat{X}_n$
of a quantify $x$ of interest, where each $\hat{X}_i$ is a random variable
with average $x$ and variance $v_i$, for $1 \leq i \leq n$.
We can obtain an estimate of $x$ that has minimum variance by averaging
$\hat{X}_1, \ldots, \hat{X}_n$ while giving each $\hat{X}_i$ a weight
proportional to $1 / v_i$, for $1 \leq i \leq n$.
That is, the minimum variance estimator $\hat{X}$ of $x$ can be obtained as:
\[
    \hat{X} = \frac{\sum_{i=1}^n \hat{X}_i /v_i}{\sum_{i=1}^n 1/v_i} \eqpun . 
\]
The variance of this estimator is 
\[
    \V(\hat{X}) = \left( \sum_{i=1}^n \frac{1}{v_i} \right)^{-1} \eqpun .
\]
\end{prop}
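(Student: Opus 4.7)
The plan is to search among linear unbiased combinations $\hat{X} = \sum_{i=1}^n w_i \hat{X}_i$ and show that the inverse-variance weights minimize the variance. First I would observe that, because each $\hat{X}_i$ has mean $x$, unbiasedness of $\hat{X}$ forces $\sum_i w_i = 1$. Next, since the $\hat{X}_i$ are uncorrelated, $\V(\hat{X}) = \sum_i w_i^2 v_i$, so the problem reduces to minimizing the quadratic form $\sum_i w_i^2 v_i$ subject to the linear constraint $\sum_i w_i = 1$.

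To carry out the minimization I would use Lagrange multipliers. Differentiating the Lagrangian $\sum_i w_i^2 v_i - \lambda(\sum_i w_i - 1)$ in $w_i$ yields $2 w_i v_i = \lambda$, i.e.\ $w_i \propto 1/v_i$. The constraint $\sum_i w_i = 1$ then fixes the normalization as $w_i = (1/v_i) / \sum_j (1/v_j)$, which is exactly the claimed weighting. Substituting these weights back into $\sum_i w_i^2 v_i$ gives, after one cancellation, the stated variance $\V(\hat{X}) = \bigl(\sum_i 1/v_i\bigr)^{-1}$.

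As a sanity check, and to confirm that this critical point really is the global minimum, I would redo the argument via Cauchy--Schwarz: writing $1 = \sum_i w_i = \sum_i (w_i \sqrt{v_i})(1/\sqrt{v_i})$ and squaring yields $1 \leq \bigl(\sum_i w_i^2 v_i\bigr)\bigl(\sum_i 1/v_i\bigr)$, with equality iff $w_i \sqrt{v_i}$ is proportional to $1/\sqrt{v_i}$, i.e.\ $w_i \propto 1/v_i$. This establishes both the lower bound on the variance and the uniqueness of the optimal weights. There isn't really a hard step here; the only point that deserves care, and that I would flag at the outset, is why the optimization must be restricted to \emph{unbiased} linear combinations --- otherwise the trivial choice $w_i \equiv 0$ would give variance zero and trivialize the problem. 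Convexity of the quadratic in $w$ then ensures no separate second-order check is required.
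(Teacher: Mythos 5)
Your proof is correct, and at its core it follows the same route as the paper: both reduce the problem to minimizing the quadratic form $\sum_i w_i^2 v_i$ (obtained from uncorrelatedness) over weights summing to one. The difference is one of completeness. The paper writes out only the two-estimate case, minimizing $\alpha_1^2 v_1 + (1-\alpha_1)^2 v_2$ by a single-variable derivative, and then asserts that ``the general case for $n$ estimates follows similarly''; you handle general $n$ directly with a Lagrange multiplier, which is the natural way to fill in that elided step. You also add something the paper omits entirely: a certification that the critical point is the \emph{global} minimum. The paper's first-order condition alone does not rule out a boundary or saddle issue (though convexity handles it, the paper never says so), whereas your Cauchy--Schwarz argument, $1 = \bigl(\sum_i (w_i\sqrt{v_i})(1/\sqrt{v_i})\bigr)^2 \leq \bigl(\sum_i w_i^2 v_i\bigr)\bigl(\sum_i 1/v_i\bigr)$, gives the lower bound $\V(\hat{X}) \geq \bigl(\sum_i 1/v_i\bigr)^{-1}$ outright, identifies the equality case $w_i \propto 1/v_i$ as the unique minimizer, and yields the stated optimal variance in one line. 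Your remark on why the optimization must be restricted to unbiased (weight-sum-one) combinations is also a point the paper takes for granted. In short: same idea, but your write-up is the more rigorous and self-contained of the two.
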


\begin{proof}
Given two uncorrelated estimates $\hat{X}_1, \hat{X}_2$, with variances $v_1,
v_2$, consider their linear combination $Y = \alpha_1 \hat{X}_1 +
\alpha_2 \hat{X}_2$, with $\alpha_1 + \alpha_2 = 1$.
By the Bienaym\'e formula, the variance of $Y$ is given by 
$\alpha_1^2 v_1 + (1 - \alpha_1)^2 v_2.$.
If we take the derivative with respect to $\alpha_1$, and set it to~0, we
obtain $\alpha_1 v_1 = \alpha_2 v_2$, or $\alpha_1 \propto 1/v_1$ and
$\alpha_2 \propto 1/v_2$.
The general case for $n$ estimates follows similarly. 
\end{proof}

This observation immediately suggests how to obtain
reputation-based crowdsourcing algorithms for grades: if we could somehow
measure the variance $v_i$ of each student $i$, we could weigh the input
provided by student $i$ in proportion to $1 / v_i$.

\subsection{Algorithm structure}

We developed an algorithm that proceeds
in iterative fashion, using consensus grades to estimate the grading
variance of each user, and using the information on user variance to compute
more precise consensus grades. 
The structure of the algorithm is inspired by the algorithm of
\cite{Karger2011} for computing consensus boolean values.
To state the algorithm, we denote by $U$ the set of students, and by $S$ the
set of items to be graded (the submissions). 
We let $G = (T, E)$  be the graph encoding the review
relation, where $T = S \union U$ and $S \inters U = \emptyset$, and where 
$(i, j) \in E$ iff $j$ reviewed $i$; for $(i, j) \in E$, we let $g_{ij}$ be the
grade assigned by $j$ to $i$.
We denote by $\partial t$ the 1-neighborhood of a node $t \in T$.

The algorithm proceeds by updating estimtes $v_j$ of the variance of user $j
\in U$, and estimates $c_i$ of the consensus grade of item $i \in S$, and
estimates $v_i$ of the variance with which $c_i$ is known.
To produce these estimates, the algorithm relies on messages $m = (l, x, v)$
consisting of a source $l \in S \union U$, of a value $x$, and of a variance
$v$.
We denote by $M_i, M_j$ the lists of messages associated with item $i \in
T$ or user $j \in U$.
Given a set $M$ of messages, we indicate by 
\begin{align*}
    \E(M) & = \frac{\sum_{(l, x, v) \in M} x / v}{\sum_{(l, x, v) \in M} 1/v}
    \\
    \V(M) & = \left(\sum_{(l, x, v) \in M} \frac{1}{v}\right)^{-1}
\end{align*}
the best estimator we can obtain from $M$, and its variance.

\begin{algorithm*}[t]
    \caption{The Vancouver Algorithm.}
    \label{algo-vanc}
    {\bf Input:} A review graph $G = ((S \union U), E)$ such that
    $|\partial t| > 1$ for all $t \in S \union U$, along with
    $\set{g_{ij}}_{(i,j) \in E}$, and number of iterations $K > 0$. \\
    {\bf Output:} Estimates $\hat{q}_i$ for $i \in S$.
    \begin{algorithmic}[1]
        \STATE \COMMENT{Initialization}
        \FORALL{$i \in S$} \label{algo-vanc-ini1}
            \STATE{$M_i := \set{(j, g_{ij}, 1) \mid (i, j) \in E}$.}
        \ENDFOR \label{algo-vanc-ini3}
        \FOR{iteration $k = 1, 2, \ldots, K$}
            \STATE \COMMENT{Propagation from items}
            \FORALL{$j \in U$} \label{algo-vanc-from-items-1}
                \STATE{$M_j := \emptyset$}
            \ENDFOR
            \FORALL{$i \in S$}
                \FORALL{$j \in \partial i$} \label{algo-vanc-opt1}
                    \STATE{Let $M_{-j} = \set{(j', x, v) \in M_i \mid
                            j' \neq j}$ in
                            $M_j := M_j \union (i, \E(M_{-j}), \V(M_{-j}))$}   
                            \label{algo-vanc-set1}
                \ENDFOR
            \ENDFOR \label{algo-vanc-from-items-10}
            \STATE \COMMENT{Propagation from users}
            \FORALL{$i \in S$} \label{algo-vanc-from-users-1}
                \STATE{$M_i := \emptyset$}
            \ENDFOR
            \FORALL{$j \in U$}
                \FORALL{$i \in \partial j$} \label{algo-vanc-opt2}
                    \STATE{Let $M_{-i} = \set{(i', (x - g_{i'j})^2, v) \mid 
                            (i', x, v) \in M_j, i' \neq i}$ in
                            $M_i := M_i \union (j, g_{ij}, \E(M_{-j}))$}
                            \label{algo-vanc-set2}
                \ENDFOR
            \ENDFOR \label{algo-vanc-from-users-10}
        \ENDFOR
        \STATE \COMMENT{Final Aggregation}
        \FORALL{$i \in S$} \label{algo-vanc-aggreg-1}
            \STATE{$\hat{q}_i := \E(M_i)$}
        \ENDFOR \label{algo-vanc-aggreg-3}   
    \end{algorithmic}
\end{algorithm*}

The details are given in Algorithm~\ref{algo-vanc}.
Lines \ref{algo-vanc-ini1}--\ref{algo-vanc-ini3} initialize the messages to
items using the grade assigned by the users, and a constant variance (whose
precise value is unimportant).  
If we had a-priori information on the variance of some users, it could be used
in this initialization step.
Lines \ref{algo-vanc-from-items-1}--\ref{algo-vanc-from-items-10} propagate,
from items to the users who graded them, the best estimate
available on the item grades and variances.  
In line~\ref{algo-vanc-set1}, when we compute the estimate that is sent to each
user, we do not use information coming from that same user.
Lines \ref{algo-vanc-from-users-1}--\ref{algo-vanc-from-users-10} propagate,
from users to the items they graded, the (immutable) grade the user assigned to
the item, and a newly-recomputed estimate of the user's grading variance.
The estimate of the user variance is computed by considering the differences
between the item grades assigned by the user, and the estimates received from
the items.
Again, when computing the user variance that will be sent to an item, we do not
consider the contribution to the variance due to this same item.
Finally, in lines \ref{algo-vanc-aggreg-1}--\ref{algo-vanc-aggreg-3} we
aggregate the information from users into our final estimates of item grades.
We note that we gave above the most concise presentation of the algorithm; a
more efficient implementation can be obtained by optimizing, in the loops at
lines \ref{algo-vanc-opt1} and \ref{algo-vanc-opt2}, the constructions of the
sets of messages, considering the overlap between the sets. 
This reduces the time for each loop from $\calo(n m^2)$ to $\calo(n m)$, where
$n$ is the number of users and items, and $m$ is the number of reviews for each
item.

\subsection{Performance on synthetic data}

We evaluated the performance of \vanc\ on simulated data; results
on real-world data will be given in Section~\ref{sec-results}.
We considered 50 users and 50 items, with each user reviewing 6 items; these
numbers are similar to those occurring in our actual assignments.
The true quality $q_i$ of each item $i$ we assumed was normal-distributed with
standard deviation~1. 
We assumed that each user $j$ had a characteristic variance $v_j$, and we let
the grade $q_{ij}$ assigned by $j$ to $i$ be equal to $q_i + \Delta_{ij}$, 
where $q_i$ is the true quality of $i$, and $\Delta_{ij}$ has normal
distribution with mean 0 and variance $v_j$. 
We assumed that the variances $\set{v_j}_{j \in U}$ of the users were
distributed according to a Gamma distribution with scale~0.4, and shape
factors $k = 2, 3$.
The results are summarized in Table~\ref{table-vancouver}. 
For each shape factor, and each of the two algorithms \algoavg and \vanc, we
report the statistical correlation $\rho$ between true quality $q_i$ and
consensus quality $\hat{q}_i$ for all items $i$, as well as the standard
deviation $\sigma$ of the difference $q_i - \hat{q}_i$. 
Each entry in the table is the average over 100 runs.
The \vanc\ algorithm reduces the error betewen true and consensus grades by a
factor between 3 and~4, compared with simple average \algoavg.
The fact that the gain is larger for shape factor $k=2$ compared with $k=3$
indicates that the algorithm performs better when there are fewer, more
imprecise users. 
Even more significant is the increase in the correlation $\rho$.
The code used for the table can be obtained from
\url{https://github.com/lucadealfaro/vancouver}, and corresponds to the tag
``2013-techrep''; the code can be easily adapted to study the performance of
the algorithms under different sets of assumptions on user behavior.

\begin{table}
\begin{center}
\begin{tabular}{r||r|r||r|r|}
 & \multicolumn{2}{|c||}{$\rho$} & \multicolumn{2}{|c|}{$\sigma$} \\ 
 & \multicolumn{1}{|c|}{$k=2$} & \multicolumn{1}{|c||}{$k=3$}
 & \multicolumn{1}{|c|}{$k=2$} & \multicolumn{1}{|c|}{$k=3$} \\ \hline
\algoavg & 0.82 & 0.63 & 0.69 & 1.21 \\
\vanc    & 0.99 & 0.93 & 0.15 & 0.38 \\ \hline
\end{tabular}
\end{center}
\caption{Performance of \vanc\ algorithm on synthetic data.}
\label{table-vancouver}
\end{table}

\section{Experimental Results}
\label{sec-results}

We performed two different types of evaluations of the precision of the \vanc\
algorithm in assigning consensus grades to assignments.
In one type of evaluation, we compared crowdsourced consensus grades with
control grades given by the instructor or other domain experts; 
in the other type, we measured the grade difference among submissions
that we knew were identical.

\subsection{The dataset}

The evaluation dataset consisted in five homework assignments for an Android
class (CMPS 121); five homework assignments for a C++ class (CMPS 109), and one
homework assignment for a Java class (LP2).
The number of homework submissions, and reviews, for these classes are
summarized in Table~\ref{table-num-reviews}.
As the table indicates, students generally performed the reviews that they were
asked to do, indicating that the system of incentives we have in place
(discussed more in depth in Section~\ref{sec-incentives}) was effective. 
Some of the difference between the number of reviews due, and performed, can be
ascribed to the fact that students could decline to review specific
submissions.
The table also shows that, in the initial homework assignments of each class,
some submissions received a low number of reviews. 
This occurred as we had not yet fine-tuned our algorithms for assigning reviews
to students. 
Once we developed algorithms that try to predict the probability that each
outstanding review will be completed, we were able to ensure a more uniform
review coverage.

\begin{table}[t]
\begin{center}
\begin{tabular}{r|r|r|r|r|r|}
Assignment & $|S|$ & RevsDue & MinRevs & AvgRevs \\ \hline
CMPS 121 hw 1 & 60 & 6 & 2 & 5.4 \\
         hw 2 & 61 & 6 & 2 & 5.3 \\
         hw 3 & 68 & 6 & 0 & 4.8 \\
         hw 4 & 62 & 6 & 6 & 6.1 \\
         hw 5 & 57 & 6 & 5 & 5.3 \\ \hline
CMPS 109 hw 1 & 102 & 5 & 0 & 4.6 \\ 
         hw 2 &  97 & 5 & 3 & 4.6 \\
         hw 3 &  91 & 5 & 4 & 5.1 \\
         hw 4 &  97 & 5 & 3 & 4.6 \\
         hw 5 &  90 & 5 & 4 & 5.1 \\ \hline
\end{tabular}
\caption{Number of reviews assigned and performed for the homework assignments
that are part of the dataset.  $|S|$ is the number of submissions, RevsDue is
the number of reviews that each student ought to have done, MinRevs is the
minimum number of reviews received by a submission, and AvgRevs is the
average number of reviews per submission.}
\label{table-num-reviews}
\end{center}
\end{table}

\subsection{Evaluation using control grades}

For some assignments, we had available control grades given by the instructor,
or other domain experts, for a randomly selected subset of submissions that
numbered at least~20.
For the Android assignments, the control grades were assigned by a Teaching
Assistant (TA) who was a fairly accomplished Android developer. 
For the Java assignment, the control grades were provided by the instructor. 
For the C++ assignments, the authors graded 20 or more randomly selected
submissions for each assignment.
We compared the control grades with the consensus grades computed by \algoavg\
and \vanc\ according to the following metrics:
\begin{itemize}
  \item $\rho$: the coefficient of statistical correlation (also
  known as Pearson's correlation) between the control grades $\set{q_i}$ and
  the consensus grades $\set{\hat{q}_i}$.
  \item KT: the Kendall-Tau distance between the orderings induced
  by the control and consensus grades \cite{kendall1990rank}.
  If $r_i$ and $t_i$ are the ranks received by submission $i$ in the computed,
  and control, rankings respectively, then $\mbox{KT} = \sum_i (r_i - t_i)$.
  \item norm-2: the norm-2 distance $(\sum_i (q_i - \hat{q}_i)^2)^{1/2}$
  between the control grades $\set{q_i}$ and the consensus grades
  $\set{\hat{q}_i}$.  Grades were awarded on a scale from 0 to~10 in the
  assignments.\footnote{The grading scale can be chosen for each assignment,
  but all assignments so far have used a 0 to 10 scale.}
  \item s-score: we first normalize the control grades $\set{q_i}$ and the
  consensus grades $\set{\hat{q}_i}$, so that they both have zero mean and unit
  variance, obtaining $\set{q'_i}$, $\set{\hat{q}'_i}$   Then, we compute the
  standard deviation $s$ of $\set{q'_i - \hat{q}'_i}$, and we report the
  s-score $1 - s / \sqrt{2}$.
\end{itemize}

The results for the various assignments are reported in
Table~\ref{table-control}.
We see that the results are unclear: in CMPS~121 and JP2, \vanc\ does better;
in the two CMPS~109 assignments, it does worse.
This may be a consequence of the fact that the primary cause of evaluation
error in CMPS~109 consisted in failures encountered by students
in compiling the C++ submissions of other students, triggered by development
environment (operating system, build chain) differences. 
These failures are not well modeled by the assumption that each user has
an intrinsic review accuracy: the fact that compilation problems occurred in
one review may have little bearing on the accuracy of other reviews by the same user.
The low correlation between consensus grades and control grades for CMPS~121 is
due to the fact that the control grades have a very coarse granularity (few
values in the grading scale were used).
We also note that this evaluation is inherently approximate, since the control
grade is affected by the same type of imprecision that affects the student-provided
grades.
While instructor and TAs are (usually) more knowledgeable than students in the
subject matter, they also make mistakes when grading homeworks, failing to spot
problems, or not giving credit to great aspects of the work that go undetected.  

\begin{table*}[t]
\begin{center}
\begin{tabular}{ll|r|r|r|r|}
Homework & Algorithm & $\rho$ & KT & norm-2 & s-score \\ \hline 
CMPS 109 hw 2 & \algoavg\ & 0.75 & 0.37 & 1.40 & 0.50 \\ 
              & \vanc\    & 0.69 & 0.39 & 1.59 & 0.45 \\ \hline
CMPS 109 hw 3 & \algoavg\ & 0.84  & 0.39 & 1.49 & 0.60 \\
              & \vanc\    & 0.80 & 0.42 & 1.75 & 0.55 \\ \hline
CMPS 121 hw 3 & \algoavg\ & 0.39 & 0.53 & 1.63 & 0.22 \\
              & \vanc\    & 0.49 & 0.53 & 1.33 & 0.29 \\ \hline
LP2           & \algoavg\ & 0.85 & 0.20 & 1.75 & 0.61 \\
              & \vanc\    & 0.87 & 0.18 & 1.79 & 0.64 \\ \hline
\end{tabular}
\end{center}
\caption{Performance of \algoavg\ and \vanc, with respect to control grades.}
\label{table-control}
\end{table*}

\subsection{Evaluation using pairs of identical submissions}

For some of the CMPS~109 C++ homework assignments, students were able to work
in groups.
Since at the time CrowdGrader did not support group submissions (the feature
has since been added), the students were asked to each submit a solution. 
The student submissions would be graded independently, and the TA, who had a
list of groups and their members, would then average the grades received by the
students in the same group, and assign to each group member this average.
This meant that we had available several pairs of identical submissions, coming
from members of the same group.
This made it possible to judge the quality of a crowdsourcing algorithm
according to how close were the grades received by pairs of such identical
submissions.
In Table~\ref{table-pairs}, we report on the average $D$ of $(\hat{q}_i -
\hat{q}_l)^2$, computed over all pairs $(i, l)$ of identical submissions, for
the algorithms \vanc\ and \algoavg.
We see that according to this measure, even for CMPS~109 \vanc\ has generally
better performance than \algoavg, even though the difference is not large.

\begin{table}[t]
\begin{center}
\begin{tabular}{l|r|r|r|}
Assignment & D, \vanc\ & D, \algoavg\ & N. pairs \\ \hline
CMPS 109 hw 2 & 1.97 & 3.24 & 6 \\ 
CMPS 109 hw 3 & 1.29 & 1.39 & 12 \\ 
CMPS 109 hw 4 & 0.98 & 1.07 & 20 \\ 
CMPS 109 hw 5 & 1.38 & 1.19 & 20 \\ \hline
\end{tabular}
\caption{Average square difference between grades received by identical
assignments, using crowdsourcing algorithms \vanc\ and \algoavg.}
\label{table-pairs}
\end{center}
\end{table}

\subsection{Discussion}

The results presented in this section show that, for our assignments, the
\vanc\ algorithm provides a smaller advantage, compared to \algoavg, than it
would be expected from Table~\ref{table-vancouver}.
We believe that the lower performance is due to the fact that the user error
model used in developing algorithm \vanc, in which each user $i$ has a
variance $v_i$, is only an approximation for the real behavior of students
reviewing submissions. 
The largest single cause of review errors were:
\begin{itemize}
  \item Unclear problem statements, that caused different students to have
  different interpretations of what constituted a good homework solution.
  \item Variability in the student's code development environment that
  occasionally prevented students from compiling and evaluating
  submissions.
\end{itemize}
The clarity and precision of homework assignments is likely the major factor in
the precision of any tool, or any TA, in evaluating submitted solutions.
We believe that the higher correlation and quality of the results for the Java
assignment are due to the uniformity of the environment enforced for that
submission.

We also experimented with a number of variations of algorithm \vanc, some based
on using notions of median or weighed median for selecting grades.
In particular, we experimented with a method we nicknamed ``maverage'', in
which we aggregated student-assigned grades for each item by first discarding
the highest and lowest grades, then doing a weighted average using the
reciprocal of variance as weights. 
This process was inspired by the way used to average the grades given by
Olympic judges in competitions. 
We also tried to learn the positive or negative bias of each student compared
to the others, and subtract the bias before using the student's grades. 
None of these variants was clearly superior to \vanc.
We believe that larger datasets are needed for us to be able to formulate and
validate algorithms superior to \vanc.

\section{Review Incentive and Final Grade Assignment}
\label{sec-incentives}

\subsection{Review Incentive}

To provide an incentive for students to complete a certain number of reviews
per assignment, we made the review effort a component of the overall grade that
was assigned to students. 
For each homework assignment, the instructor could choose the number $N$ of
reviews each student had to perform, and the fraction $0 < p_r < 1$ of the
grade that was due to reviews. 
Each student $j$ then received for the assignment a {\em crowd-grade\/} equal
to
\[
    (1 - p_r) \hat{q}_j + p_r \frac{\min(m_j, N)}{N} \hat{r}_j \eqpun ,
\]
where $m_j$ is the number of reviews actually performed by student
$j$, and where $\hat{r}_j$ is the estimated {\em review quality\/} of $j$,
which we discuss below.
The choice of $N$ and $p_r$ was dictated chiefly by practical considerations. 
In our coding assignments, evaluating a homework submission entailed a lengthy
process of unpacking a submission in its own directory, loading it with a tool,
reading the various source code files, compiling it, and testing it sometimes
with the help of test data. 
The whole process would take between 5 and 10 minutes for each homework;
we chose $N = 5$ or $N = 6$, as the results appeared to be sufficiently
accurate.  
We also wanted to ensure that each student was able to learn by
reading good-quality submissions by others, and a value of $N = 5$ was
sufficient in practice to ensure this (students could always do additional
reviews if they wished to see even more solutions).
For $p_r$, a common choice was 0.25, so that 25\% of the crowd-grade was due to
the reviews.
This value roughly reflected the proportion between the time required to
review the submissions, and the time required to complete and submit one's own submission.

The decision of how to measure $\hat{r}_j$ for a student $j$ turned out to be
more difficult. 
Initially, we defined it as follows. 
Let $\set{g_{ij}}_{i \in S, j \in U}$ be the set of all grades that were
assigned, and let $\set{\hat{q}_i}_{i \in U}$ be the set of all consensus
grades, as before.
Then, 
\[
    \tilde{v} = \E(\set{(g_{ij} - \hat{q}_l)^2}_{j \in U; i, l \in S})
\]
is the average square error with respect to the consensus grades of a
hypothetical ``fully random'' user, who assigns to each submission a grade
picked at random from the complete set of assigned grades.
The actual average square error  $\tilde{v}_j$ of a student $j \in U$ with
respect to the consensus grades is instead: 
\[
    \tilde{v}_j = \E(\set{(g_{ij} - \hat{q}_i)}_{i \in S}) \eqpun . 
\]
Therefore, we experimented with assigning to each student a review grade
that measured how much better the student was than such a fully random grader,
using:
\[
    \hat{r}_j = 1 - \sqrt{\frac{\min(\tilde{v}_j, \tilde{v})}{\tilde{v}}}
    \eqpun .
\]
This choice appealed to us from a theoretical point of view, especially as it
is scale-invariant, so that it would not matter whether students were using the
full grading scale (in our case, $[0, 10]$) or a subset of it (for instance,
assigning grades only in the interval $[4, 8]$).
However, the choice did not work to our satisfaction in practice. 
In each assignment, some perfectly honest and motivated students
received very low review grades, including~0: strange
as it might seem, some students really did worse than a random grader, in spite
of their best intentions. 
Those students were not pleased to see the time they put into reviewing
homework submissions go completely unrewarded.
The problem was especially acute in the initial homework assignments of each
class, where a large fraction of homework submissions received the maximum
grade, thereby lowering $\tilde{v}$, and making it harder to improve on the
random grader.

As student satisfaction is one of our goals, we needed a different approach. 
We do not yet have a perfect solution: a metric that is scale invariant and
rewards true accuracy as compared to random input, and  yet, that students find
fair and gratifying.
The metric currently used by CrowdGrader is a fairly generous one.
We let $v_G = G^2 / 3.125$ be a refence level for the average square error,
where $G$ is the maximum of the grading scale used (we omit the
justification, as it is fairly ad-hoc), and we use:
\[
    \hat{r}_j = 1 - \sqrt{\frac{\min{\tilde{v}_j, v_G}}{v_G}} \eqpun .
\]
This is not scale-invariant, so that students would get
a higher review grade simply by agreeing to use only a small portion of the
overall grade range available to them.
We are still seeking a scale-invariant solution that students find equitable.

\subsection{Final grade assignment}

CrowdGrader produces {\em crowd-grades\/} that depend both on the submission and
on the review grades, as described above. 
The instructor can then either accept these grades as final, or provide final
grades for a few of the students; the final grades for the remainder of the students are
then derived by interpolation, according to their crowd-grades. 
This gives the ability to the instructor to re-shape the grade curve of the
class.
In the Android class (CMPS~121), the instructor relied on this function to
manually choose the dividing lines between A/B, B/C, and C/F grades.
The instructor examined several assignments chosen from the class rank
order, read the reviews, and assign grades (5.3 for A+, 4.5 for the A/B
dividing line, etc.) to selected assignments; CrowdGrader then computed the
remaining final grades by linear interpolation, in proportion to the
crowd-grades. 
In CMPS~109, the instructors often used the crowd-grades as final grades.

\section{Conclusions}

We conclude with some informal impressions on the performance of CrowdGrader in
a class setting.

We investigated many cases where the control and consensus
grades differed by some non-trivial amount. 
In some cases, this was due to superficial reviews by students using
CrowdGrader. 
However, in other cases the problem was with the control grade, as the
instructor or TA had missed problems with the submission that were instead
detected by some students reviewing it. 
Overall, for coding assignments, our impression was that the consensus grades
computed by CrowdGrader were at least of the same quality as those provided by
a TA. 
A TA is more consistent in evaluating submissions, paying attention to the same
aspects of each submission. 
On the other hand, the greater number of reviews used in CrowdGrader led to a
more comprehensive assessment, in which flaws or positive aspects were more
likely to be pointed out.
From the perspective of the individual student, we felt the two grading options
were of similar quality: with TAs, the risk is that they do not pay attention
in their grading to the aspects where most effort is put (or where the flaws
are); with crowdsourced grades, the risk is in the inherent variability of
the process.

Where the crowdsourced evaluations proved clearly superior was in the
feedback provided to the students.
When instructors or TAs are faced with grading a large number of assignments,
the feedback they provide on each individual assignment is usually limited.
With CrowdGrader, students had access to multiple reviews of their
homework submissions. 

In coding assignments, there is usually more than one way to solve each
problem, and students commented on the benefit of being able to see, and learn
from, other students' solutions.
Students who could not complete the assignment particularly benefited from
being able to examine several different working solutions to the homework
problems.

In informal comments we received, the two aspects of CrowdGrader students
appreciated the most was the quality of the feedback received, and the ability
to learn from other students' solutions.
The one aspect they enjoyed the least, of course, was the time it took for them
to do the reviews. 

While CrowdGrader may not be suitable for all types of homework assignments,
the tool performed to our satisfaction for coding assignments, and we believe
that the tool is well-suited to any homework assignment where students can, by
comparing solutions among them and with their own, come to an assessment of
their peers' work.

\section*{Acknowledgements}

We thank Ira Pohl at UC Santa Cruz for being an early adopter of CrowdGrader,
and for providing insight and encouragement for this work. 
We thank Marco Faella at the University of Naples for agreeing to use
Crowdgrader in his class when the tool was still in an early, very much
experimental, version.

\bibliography{references}
\end{document}